\documentclass[11pt,
onecolumn
]{article}

\usepackage[utf8]{inputenc}
\usepackage[T1]{fontenc}

\usepackage{amsmath,amssymb,amsfonts,mathtools}
\usepackage{bm}
\usepackage{physics}

\usepackage{graphicx}
\usepackage{booktabs}
\usepackage{multirow}

\usepackage{cite}
\usepackage{url}
\usepackage{xcolor}

\usepackage[
    colorlinks=true,
    linkcolor=black,
    citecolor=black,
    urlcolor=black
]{hyperref}

\usepackage[
    margin=0.75in
]{geometry}

\newtheorem{theorem}{Theorem}[section]
\newtheorem{definition}[theorem]{Definition}
\newtheorem{proposition}[theorem]{Proposition}

\newtheorem{proof}[theorem]{Proof}

\title{
Finite nonlinear mathematical structures induced by the Tsallis $q$-sum
}

\author{
    Ignacio S. Gomez$^{1,2}$
    \\[3pt]
    $^{1}$Departamento de Ciências Exatas e Naturais, Universidade Estadual do Sudoeste da Bahia, \\
			    BR 415, Itapetinga - BA, 45700-000, Brazil
    \\[3pt]
    $^{2}$PROFÍSICA – Programa de Pós-Graduação em Física, 
Universidade Estadual de Santa Cruz,\\ 45650-000 Ilhéus, BA, Brazil
    \\[3pt]
    \small
    \texttt{ignacio.gomez@uesb.edu.br}
}

\date{}

\begin{document}

\maketitle


\begin{abstract}
The Tsallis $q$-sum is one of the fundamental nonlinear composition laws of nonextensive statistical mechanics. Although it has been extensively investigated in continuous settings, its implications for finite mathematical structures remain less explored. Here we investigate this question by replacing ordinary additive laws with the Tsallis $q$-sum. We first introduce an axiomatic $q$-cardinality of finite sets satisfying a nonlinear additivity principle. Existence and uniqueness are established, leading to an explicit expression that continuously recovers the classical cardinality as $q\to1$. We then propose a nonlinear matrix composition induced by the same deformation and establish its principal algebraic properties, including associativity, the neutral element, and a characterization of its noncommutativity in terms of the ordinary matrix commutator. An illustrative example over $M_2(\mathbb Z_6)$ shows how the deformation can modify the center of a finite matrix algebra when $1-q$ is a zero divisor. These results indicate that the Tsallis $q$-sum provides a natural mechanism for constructing nonlinear finite mathematical structures and connects nonextensive statistical mechanics with finite algebra and nonlinear mathematical physics.
\end{abstract}

\noindent
\textbf{Keywords:}
Tsallis statistics; $q$-algebra;  finite cardinality; matrix algebra; nonlinear composition laws; mathematical physics

\section{Introduction}

Generalized composition laws play an important role in nonlinear science, mathematical physics, and statistical mechanics. A prominent example is the nonlinear addition introduced by Tsallis in nonextensive statistical mechanics \cite{Tsallis1988,CuradoTsallis1991}. For $x,y\in\mathbb R$, the corresponding $q$-sum is
\begin{equation}
 x\oplus_q y=x+y+(1-q)xy.
 \label{eq:qsum}
\end{equation}
It is one of the basic operations of the generalized $q$-algebra developed by Nivanen \emph{et al.} and Borges \cite{Nivanen2003,Borges2004}. Related deformations of algebraic structures, $q$-numbers, and generalized operations have subsequently been studied in several directions \cite{Cardoso2008,Lobao2009,ElKaabouchi2009}. These developments are part of a broader program involving generalized thermostatistics, generalized central-limit behavior, and information-theoretic formulations \cite{Tsallis2009,Umarov2008,GellMannTsallis2004,Suyari2006}. Deformed exponential and logarithmic structures provide a broader mathematical setting for generalized thermostatistics \cite{Naudts2002,Kaniadakis2002,Naudts2011}. Formal-group and measure-theoretical approaches further connect composability and invariance translation with generalized entropies and algebraic structures \cite{Tempesta2011,Tempesta2016A,Tempesta2016B,EncisoTempesta2017,Gomez2018}.

Most applications of these constructions concern generalized statistical mechanics, deformed calculus, or continuous analytical structures. A natural complementary question is whether the same composition law can systematically generate new finite mathematical structures. Finite sets and finite matrix algebras provide particularly simple settings in which such a deformation can be formulated explicitly and tested algebraically.

The purpose of this work is to develop this viewpoint. We first replace the ordinary additivity of finite cardinalities by a $q$-additivity axiom. We prove existence and uniqueness and derive the corresponding closed expression, whose classical limit is the ordinary cardinality. The same deformation principle is then applied to finite matrix algebras. The resulting nonlinear matrix composition is associative, possesses the zero matrix as identity, and has a noncommutativity controlled exactly by the ordinary matrix commutator.

The construction is related to, but distinct from, earlier studies of deformed numbers and generalized algebraic operations \cite{Cardoso2008,Lobao2009,Tsallis1994,BorgesKodamaTsallis2022}. In particular, the present approach starts from an axiomatic finite-counting problem and then transfers the same composition law to matrix addition. This provides a common framework in which finite counting and matrix composition appear as two realizations of a single deformation principle.

The paper is organized as follows. Section~2 reviews the $q$-sum and motivates its application to finite counting. Section~3 develops the axiomatic $q$-cardinality and proves existence, uniqueness, and the closed formula. Section~4 introduces the nonlinear matrix composition and gives its basic algebraic properties together with an illustrative example over $M_2(\mathbb Z_6)$. Section~5 discusses the mathematical and physical interpretation of the construction. Section~6 presents the conclusions.

\section{Preliminaries}

The $q$-sum in Eq.~(\ref{eq:qsum}) satisfies
\begin{equation}
\lim_{q\to1}(x\oplus_q y)=x+y,
\end{equation}
and is commutative and associative on $\mathbb R$, with neutral element zero. It is therefore a commutative monoid operation \cite{Borges2004,Nivanen2003}. The associated $q$-algebra, together with generalized exponential and logarithmic functions, forms a standard mathematical framework for nonextensive thermostatistics \cite{Borges2004,Naudts2011}. The algebraic properties of related deformed operations have also been investigated explicitly \cite{Cardoso2008,Lobao2009,ElKaabouchi2009}.

For disjoint finite sets $A$ and $B$, ordinary cardinality obeys
\begin{equation}
 |A\cup B|=|A|+|B|.
 \label{eq:classical}
\end{equation}
We shall replace Eq.~(\ref{eq:classical}) by
\begin{equation}
 |A\cup B|_q=|A|_q\oplus_q|B|_q,
 \label{eq:axiom}
\end{equation}
while retaining the usual normalization of the empty set and singleton. This is the finite-counting deformation investigated in the next section.

\section{An Axiomatic Construction of the $q$-Cardinality}

The ordinary cardinality of finite sets is characterized by additivity. We replace this property by Eq.~(\ref{eq:axiom}) while preserving the basic normalization conditions.

\begin{definition}
A \emph{$q$-cardinality} is a mapping $|\cdot|_q:\mathcal F\to\mathbb R$, where $\mathcal F$ is the family of finite sets, satisfying
\begin{enumerate}
\item[(A1)] $|\varnothing|_q=0$;
\item[(A2)] $|\{x\}|_q=1$ for every singleton;
\item[(A3)] $|A\cup B|_q=|A|_q\oplus_q|B|_q$ for disjoint finite sets.
\end{enumerate}
\end{definition}

Let $a_n$ denote the $q$-cardinality of any finite set with $n$ elements. Axiom (A3) gives
\begin{equation}
 a_{n+1}=a_n\oplus_q1=(2-q)a_n+1,
 \qquad a_0=0.
 \label{eq:recurrence}
\end{equation}

\begin{theorem}[Existence]
The recurrence in Eq.~(\ref{eq:recurrence}) defines a $q$-cardinality satisfying the three axioms above.
\end{theorem}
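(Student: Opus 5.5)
We define $|A|_q := a_{|A|}$ for every finite set $A$, where $(a_n)_{n\ge0}$ is the sequence produced by Eq.~(\ref{eq:recurrence}). Then we check the three axioms. The recurrence determines $a_n$ uniquely for each $n\ge0$ by induction from $a_0=0$, so the map is well defined. It also depends only on the ordinary cardinality $|A|$, so all sets of the same size receive the same value.

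(A1) holds because $|\varnothing|_q=a_0=0$. (A2) holds because $|\{x\}|_q=a_1=(2-q)\cdot 0+1=1$. For (A3), take disjoint $A,B$ with $|A|=m$ and $|B|=n$. Then $|A\cup B|=m+n$ by Eq.~(\ref{eq:classical}), so it suffices to prove
\begin{equation*}
a_{m+n}=a_m\oplus_q a_n \qquad\text{for all } m,n\ge 0 .
\end{equation*}

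We prove this by induction on $n$ with $m$ fixed. For $n=0$, the identity reads $a_m=a_m\oplus_q 0$, which holds because $0$ is neutral for $\oplus_q$. For the inductive step, assume $a_{m+n}=a_m\oplus_q a_n$. Using the recurrence in the form $a_{k+1}=a_k\oplus_q 1$ and the associativity of $\oplus_q$ recalled in Section~2, we get
\begin{equation*}
a_{m+n+1}=a_{m+n}\oplus_q 1=(a_m\oplus_q a_n)\oplus_q 1=a_m\oplus_q(a_n\oplus_q 1)=a_m\oplus_q a_{n+1}.
\end{equation*}
This closes the induction. The first form of the recurrence, $a_n\oplus_q 1=(2-q)a_n+1$, is just Eq.~(\ref{eq:qsum}) with $y=1$.

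The one step that needs care is (A3). Axiom (A3) must hold for arbitrary disjoint pairs, not only when a singleton is added, and it is the associativity of $\oplus_q$ that lifts the one-step recurrence to general splits $m+n$. If a direct check is preferred, one can solve the recurrence explicitly: for $q\neq1$, $a_n=\frac{(2-q)^n-1}{1-q}$, and for $q=1$, $a_n=n$. Substituting, one finds $1+(1-q)a_{m+n}=(2-q)^{m+n}=(1+(1-q)a_m)(1+(1-q)a_n)$, and expanding the right-hand side gives exactly $a_m\oplus_q a_n$ after multiplying by $1-q$.
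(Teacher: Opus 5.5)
Your proof is correct and follows the same route as the paper: define $|A|_q=a_{|A|}$, read off (A1)--(A2), and obtain (A3) from $|A\cup B|=|A|+|B|$ together with repeated application of the recurrence. Your induction on $n$ spells out what the paper leaves implicit in ``repeated use of the recurrence'', namely that the step relies on the associativity of $\oplus_q$ and on $0$ being its neutral element.
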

\begin{proof}
For every $n\ge0$, Eq.~(\ref{eq:recurrence}) recursively defines a unique value $a_n$. If $A$ and $B$ are disjoint, their union has $|A|+|B|$ elements, and repeated use of the recurrence gives Eq.~(\ref{eq:axiom}). Hence the resulting map satisfies (A1)--(A3).
\end{proof}

\begin{theorem}[Uniqueness]
The $q$-cardinality satisfying Definition~1 is unique.
\end{theorem}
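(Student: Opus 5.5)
The plan is to show that any map $|\cdot|_q$ satisfying (A1)--(A3) assigns to each $n$-element set the value $a_n$ of Eq.~(\ref{eq:recurrence}). Its value is then determined on all of $\mathcal F$, and it must coincide with the map constructed in the Existence theorem. Let $\phi$ be any $q$-cardinality. We prove, by induction on $n=|A|$, that $\phi(A)=a_n$ for every finite set $A$ with $n$ elements.

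The base cases come straight from the normalization axioms. If $n=0$, then $A=\varnothing$ and (A1) gives $\phi(\varnothing)=0=a_0$. For the inductive step, assume $\phi(B)=a_n$ for every $n$-element set $B$, and let $A$ have $n+1\ge1$ elements. Choose any $x\in A$ and set $B=A\setminus\{x\}$. Then $B$ and $\{x\}$ are disjoint, $B\cup\{x\}=A$, and $B$ has $n$ elements. By (A3), (A2) and the induction hypothesis,
\begin{equation*}
\phi(A)=\phi(B)\oplus_q\phi(\{x\})=a_n\oplus_q 1=a_n+1+(1-q)a_n=(2-q)a_n+1=a_{n+1}.
\end{equation*}
This closes the induction. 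The case $n=1$ is consistent: it yields $0\oplus_q1=1$, in agreement with (A2).

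There is one subtle point. The value $\phi(A)$ must not depend on the choice of $x$, or more generally on how $A$ is decomposed into disjoint pieces. For uniqueness this causes no trouble, because the induction shows that every admissible decomposition produces the same number $a_{n+1}$; any possible inconsistency would be an existence issue, which the previous theorem already settles. There, associativity and commutativity of $\oplus_q$ are what make $a_m\oplus_q a_k=a_{m+k}$ hold.

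Since $\phi$ was arbitrary and $\phi(A)=a_{|A|}$ for all $A\in\mathcal F$, any two $q$-cardinalities agree everywhere, which proves uniqueness. As a by-product, solving the linear recurrence would give the closed form $a_n=\frac{(2-q)^n-1}{1-q}$ for $q\neq1$, and $a_n=n$ for $q=1$. No step is a real obstacle. The only care needed is to use (A3) with a genuinely disjoint splitting of $A$, and to invoke (A2) for the singleton term rather than assume it.
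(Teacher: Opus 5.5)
Your proof is correct and follows the same route as the paper: induction on $n$, using the recurrence $a_{n+1}=a_n\oplus_q 1$ with $a_0=0$ obtained from (A1)--(A3). Your version is slightly more careful, because splitting off an arbitrary singleton actually proves that any $q$-cardinality depends only on $|A|$, whereas the paper's notation $a_n$ tacitly assumes this.
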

\begin{proof}
Any two $q$-cardinalities satisfying (A1)--(A3) obey the same recurrence (\ref{eq:recurrence}) with the same initial condition $a_0=0$. Induction on $n$ therefore gives identical values for every finite set.
\end{proof}

Solving Eq.~(\ref{eq:recurrence}) yields
\begin{equation}
 |A|_q=\frac{(2-q)^{|A|}-1}{1-q},
 \qquad q\ne1.
 \label{eq:qcardinality}
\end{equation}
Thus
\begin{equation}
 \lim_{q\to1}|A|_q=|A|.
 \label{eq:qcardlimit}
\end{equation}
Introducing $Q=2-q$, Eq.~(\ref{eq:qcardinality}) becomes
\begin{equation}
 |A|_q=\frac{Q^{|A|}-1}{Q-1},
\end{equation}
which is the standard $Q$-integer of quantum calculus \cite{KacCheung2002}. The connection between $q$-operations and deformed numbers has also been studied in \cite{Lobao2009,BorgesKodamaTsallis2022}. Here, however, the expression is obtained directly from finite-set axioms rather than postulated as a deformed number.

\section{A Nonlinear Matrix Composition Induced by the Tsallis $q$-Sum}

The same deformation principle can be transferred from finite counting to matrix addition. Let $M_n(R)$ denote the algebra of $n\times n$ matrices over a commutative ring $R$, with $q\in R$. We define
\begin{equation}
 A\boxplus_qB=A+B+(1-q)AB.
 \label{eq:matrixcomposition}
\end{equation}
For $q\to1$, $A\boxplus_qB\to A+B$. The construction is motivated by the scalar $q$-sum, but matrix multiplication introduces a new feature: the composition can become noncommutative.

\subsection{Elementary properties}

\begin{proposition}
The operation $\boxplus_q$ is associative.
\end{proposition}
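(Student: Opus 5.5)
The plan is to expand both bracketings directly, using only the ring axioms of $M_n(R)$: bilinearity and associativity of matrix multiplication, and the fact that the scalar $c:=1-q\in R$ commutes with every matrix because $R$ is commutative. Fix $A,B,C\in M_n(R)$. I would first compute
\begin{equation*}
(A\boxplus_q B)\boxplus_q C=(A+B+cAB)+C+c(A+B+cAB)C,
\end{equation*}
then distribute the last product over the sum, which gives
\begin{equation*}
A+B+C+c(AB+AC+BC)+c^2ABC .
\end{equation*}
In the same way,
\begin{equation*}
A\boxplus_q(B\boxplus_q C)=A+(B+C+cBC)+cA(B+C+cBC),
\end{equation*}
and expanding $A(B+C+cBC)=AB+AC+cA(BC)$ yields the same expression $A+B+C+c(AB+AC+BC)+c^2ABC$.

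Next I would point out which ring axioms each step uses. The cubic terms match because $(AB)C=A(BC)$, which is associativity of matrix multiplication. Pulling $c$ out of products, as in $(cAB)C=c(ABC)$ and $A(cBC)=c(ABC)$, is allowed because scalar multiplication by elements of a commutative ring is compatible with matrix products. Note that commutativity of matrices is never needed. The ordering of the factors $A,B,C$ is preserved in every term, so the argument is valid even though $\boxplus_q$ itself is noncommutative.

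There is no real obstacle here; the only step needing care is the bookkeeping of factor order in the mixed terms $AC$ and $ABC$, making sure each keeps its left-to-right order. Optionally, one can give a slicker conceptual argument: when $c$ is invertible (or as a formal identity), $I+c(A\boxplus_q B)=(I+cA)(I+cB)$, so $\boxplus_q$ is transported from matrix multiplication. I would only remark on this, since the direct expansion works for every $q\in R$, including the case where $1-q$ is a zero divisor.
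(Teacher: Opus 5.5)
Your proposal is correct and follows the paper's own proof: you expand both bracketings to $A+B+C+(1-q)(AB+AC+BC)+(1-q)^2ABC$. Your added bookkeeping, namely associativity of matrix multiplication, centrality of the scalar $1-q$, and preservation of factor order, makes explicit what the paper leaves implicit.
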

\begin{proof}
Direct expansion gives
\begin{align*}
(A\boxplus_qB)\boxplus_qC
&=A+B+C+(1-q)(AB+AC+BC)\\
&\quad +(1-q)^2ABC,
\end{align*}
which is identical to $A\boxplus_q(B\boxplus_qC)$.
\end{proof}

\begin{proposition}
The zero matrix is the neutral element of $\boxplus_q$.
\end{proposition}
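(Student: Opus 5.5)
The plan is a direct substitution into the defining formula Eq.~(\ref{eq:matrixcomposition}), using only the ring axioms of $M_n(R)$. I will write $0$ for the zero matrix of $M_n(R)$ and check separately that it is a left identity and a right identity. Because $\boxplus_q$ is in general noncommutative, as noted after Eq.~(\ref{eq:matrixcomposition}), I cannot get one side from the other by symmetry; both computations have to be done.

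\emph{Right identity.} For arbitrary $A\in M_n(R)$, I expand $A\boxplus_q 0=A+0+(1-q)A0$. Then I use $A0=0$, which holds in any ring and is checked entrywise: every entry is a sum of products with a zero factor. I also use $(1-q)0=0$, which holds because $R$ acts on $M_n(R)$ by entrywise scalar multiplication. This leaves $A\boxplus_q 0=A$.

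\emph{Left identity.} Symmetrically, $0\boxplus_q A=0+A+(1-q)0A=A$, using $0A=0$. Since $0\boxplus_q A=A=A\boxplus_q 0$ for every $A$, the zero matrix is a two-sided neutral element.

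Finally, I will note that the argument never uses any property of $q$, invertibility of $1-q$, or whether $1-q$ is a zero divisor in $R$. The statement therefore holds for every commutative ring $R$ and every $q\in R$, including the $M_2(\mathbb Z_6)$ setting considered later. There is no substantive obstacle. The only points needing care are treating the two sides separately, because of noncommutativity, and justifying $A0=0A=0$ from the distributive law rather than from commutativity.
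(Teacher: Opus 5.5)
Your proof is correct and follows the paper's argument: substitute the zero matrix into $A\boxplus_q B=A+B+(1-q)AB$ and use $A0=0=0A$ to get $A\boxplus_q 0=0\boxplus_q A=A$. Your extra remarks on checking both sides separately and on independence from $q$ are valid. They add care but not a different route.
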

\begin{proof}
Since $A0=0=0A$, one has $A\boxplus_q0=0\boxplus_qA=A$.
\end{proof}

\begin{proposition}
The noncommutativity of $\boxplus_q$ is governed by the ordinary matrix commutator.
\end{proposition}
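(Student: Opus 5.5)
The plan is to make the statement precise as an identity, namely
\begin{equation*}
A\boxplus_qB-B\boxplus_qA=(1-q)[A,B],\qquad [A,B]=AB-BA,
\end{equation*}
and then read off its consequences. The identity itself follows from subtracting the two instances of Eq.~(\ref{eq:matrixcomposition}). The linear parts $A+B$ and $B+A$ cancel because ordinary matrix addition is commutative. The quadratic parts leave $(1-q)AB-(1-q)BA$.

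Since $1-q$ lies in the commutative ring $R$, it acts as a central scalar on $M_n(R)$. It can therefore be factored out to give $(1-q)(AB-BA)$. This establishes the displayed formula for all $A,B\in M_n(R)$.

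Next I would state the characterization this gives: $A$ and $B$ commute under $\boxplus_q$ if and only if $(1-q)[A,B]=0$. I would then separate two cases.

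\begin{enumerate}
\item[(i)] If $1-q$ is not a zero divisor in $R$ (for instance $R$ a field or an integral domain with $q\ne1$), then $(1-q)X=0$ forces every entry of $X$ to vanish. In this case $\boxplus_q$-commutativity is equivalent to ordinary commutativity $[A,B]=0$, and the $\boxplus_q$-center coincides with the usual center of $M_n(R)$.
\item[(ii)] If $1-q$ is a zero divisor, the condition only requires the entries of $[A,B]$ to lie in the annihilator of $1-q$. The commuting set can then be strictly larger than the usual one. This anticipates the $M_2(\mathbb Z_6)$ example.
\end{enumerate}

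I would also note the case $q=1$ in passing: the operation reduces to addition and is commutative.

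No step is genuinely hard. The only point that needs care is the cancellation argument in case (i). There one must argue entrywise that $(1-q)c_{ij}=0$ implies $c_{ij}=0$, which is exactly where the hypothesis that $1-q$ is not a zero divisor enters. It should also be made explicit that commutativity of $R$ is used to pull $1-q$ out of the matrix product.
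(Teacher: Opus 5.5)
Your proposal is correct and is the paper's own proof: subtract the two orderings of Eq.~(\ref{eq:matrixcomposition}) to obtain $A\boxplus_qB-B\boxplus_qA=(1-q)[A,B]$. Your case split on whether $1-q$ is a zero divisor matches the paper's later discussion of the $q$-center in the $M_2(\mathbb Z_6)$ example, and one minor point: factoring $1-q$ out of $(1-q)AB-(1-q)BA$ uses only distributivity of scalar multiplication, not commutativity of $R$.
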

\begin{proof}
Subtracting the two possible orderings gives
\begin{equation}
 A\boxplus_qB-B\boxplus_qA=(1-q)(AB-BA)=(1-q)[A,B].
 \label{eq:commutator}
\end{equation}
\end{proof}

Thus commuting matrices remain commuting under $\boxplus_q$, whereas noncommuting matrices acquire a deformation proportional to $(1-q)$. In particular, the noncommutative contribution vanishes continuously in the classical limit $q\to1$.

The composition is also covariant under similarity transformations. For every invertible $P$,
\begin{equation}
 P^{-1}(A\boxplus_qB)P=(P^{-1}AP)\boxplus_q(P^{-1}BP),
\end{equation}
which follows immediately from Eq.~(\ref{eq:matrixcomposition}). Hence the deformation is compatible with the standard change of matrix representation.

\subsection{Illustrative example: $M_2(\mathbb Z_6)$}

Consider the finite matrix algebra $M_2(\mathbb Z_6)$. Its ordinary multiplicative center is
\begin{equation}
 Z(M_2(\mathbb Z_6))=\{\lambda I:\lambda\in\mathbb Z_6\}.
\end{equation}
For the deformed composition, define
\begin{equation}
 Z_q(M_2(\mathbb Z_6))
 =\{A:A\boxplus_qB=B\boxplus_qA,\ \forall B\}.
 \label{eq:qcenter}
\end{equation}
Equation~(\ref{eq:commutator}) gives the condition
\begin{equation}
 (1-q)[A,B]=0,\qquad \forall B.
 \label{eq:qcentercondition}
\end{equation}
Therefore
\begin{equation}
 Z(M_2(\mathbb Z_6))\subseteq Z_q(M_2(\mathbb Z_6)),
\end{equation}
while equality is obtained when $1-q$ is a unit in $\mathbb Z_6$. If $1-q$ is a zero divisor modulo $6$, additional matrices can satisfy Eq.~(\ref{eq:qcentercondition}). Thus the deformed center depends not only on the underlying finite matrix algebra but also on the arithmetic of the deformation parameter.

\section{Discussion}

The two constructions developed above originate from the same replacement of ordinary addition by the Tsallis $q$-sum. The first acts on finite counting, whereas the second acts on matrix composition. In both cases, the ordinary theory is recovered continuously as $q\to1$.

The $q$-cardinality is uniquely fixed by a small set of axioms, and its closed form is a $Q$-integer with $Q=2-q$. This places the construction in direct contact with the broader literature on deformed numbers and $q$-calculus \cite{KacCheung2002,Lobao2009}. At the same time, the matrix construction extends the same composition principle to a setting in which multiplication is generally noncommutative. Equation~(\ref{eq:commutator}) is therefore central: it separates the nonlinear deformation parameter from the intrinsic noncommutativity of the matrix algebra.

A useful physical interpretation is obtained by viewing $q=1$ as the undeformed or classical additive regime. Away from this limit, the composition law contains a nonlinear correction, while the matrix case can additionally display noncommutativity. This should be understood as an algebraic analogy rather than as a claim that the present construction by itself constitutes a classical--quantum correspondence. The same viewpoint is consistent with the broader use of deformation and composability in generalized statistical mechanics and formal-group approaches \cite{Tempesta2011,Tempesta2016A,Tempesta2016B,EncisoTempesta2017,Naudts2011}.

Table~\ref{tab:comparison} summarizes the main differences between the ordinary and $q$-deformed structures. It highlights that finite counting and matrix composition are governed by one common deformation rule while exhibiting different algebraic consequences.

\begin{table}[t]
\centering
\small
\begin{tabular}{lll}
\hline
Feature & Classical & $q$-deformed\\
\hline
Composition & $x+y$ & $x\oplus_q y$\\
Finite counting & $|A|$ & $|A|_q$\\
Disjoint union & Additive & $q$-additive\\
Matrix law & $A+B$ & $A\boxplus_qB$\\
Commutativity & Additive & $(1-q)[A,B]$\\
Counting formula & $n$ & $\dfrac{(2-q)^n-1}{1-q}$\\
Classical limit & -- & $q\to1$\\
\hline
\end{tabular}
\caption{Comparison of the classical and $q$-deformed finite structures introduced in this work.}
\label{tab:comparison}
\end{table}

The construction suggests several extensions, including deformed finite groups, $q$-Lie-type structures, generalized combinatorial invariants, and applications to nonlinear operator and information-theoretic settings. These possibilities are natural directions for future work, but are not assumed in the present results.

\section{Conclusions}

We have investigated a common deformation principle based on the Tsallis $q$-sum and applied it to two finite mathematical structures. First, an axiomatic $q$-cardinality was constructed for finite sets. The axioms lead to a unique counting function with the explicit form in Eq.~(\ref{eq:qcardinality}), and the ordinary cardinality is recovered in the limit $q\to1$.

Second, we introduced the nonlinear matrix composition $A\boxplus_qB=A+B+(1-q)AB$. We proved associativity, identified the zero matrix as its neutral element, and established the exact relation between its noncommutativity and the ordinary commutator through Eq.~(\ref{eq:commutator}). The example of $M_2(\mathbb Z_6)$ further shows that the corresponding $q$-center can depend on the arithmetic properties of $1-q$, providing a concrete finite realization of the deformation.

Table~\ref{tab:comparison} emphasizes the common origin of the two constructions and the recovery of the classical additive theory at $q=1$. Overall, the results support the interpretation of the Tsallis $q$-sum as a useful mechanism for constructing and analyzing nonlinear finite mathematical structures. Future work may investigate the associated group-theoretical, Lie-algebraic, combinatorial, and dynamical extensions.

\section*{Acknowledgements}

Ignacio S. Gomez acknowledges support from the Department of Exact and Natural Sciences of the State University of Southwest Bahia (UESB), Itapetinga, Bahia, Brazil; from PROFÍSICA (UESC), Ilhéus, Bahia, Brazil; and from the Conselho Nacional de Desenvolvimento Científico e Tecnológico (CNPq), Grant No. 316131/2023-7.

\bibliographystyle{unsrt}
\bibliography{q-references}

\end{document}